\definecolor{bleudefrance}{rgb}{0.19, 0.55, 0.91}
\definecolor{ao(english)}{rgb}{0.0, 0.5, 0.0}
\newcommand{\addcite}[0]{\ifthenelse{\boolean{showcomments}}
{\textcolor{purple}{(add cite(s)) }}{}}%
\newcommand{\enrique}[1]{  \ifthenelse{\boolean{showcomments}}
{\todo[inline,color=bleudefrance]{Enrique: #1}}{}}
\newcommand{\emmargin}[1]{\ifthenelse{\boolean{showcomments}}{\marginpar{\color{bleudefrance}\tiny EM: #1}}{}}
\newcommand{\Yan}[1]{  \ifthenelse{\boolean{showcomments}}
{\todo[inline,color=yellow]{Yan: #1}}{}}
\newcommand{\Petr}[1]{  \ifthenelse{\boolean{showcomments}}
{\todo[inline,color=green]{Petr: #1}}{}}
\newcommand{\aem}[1]{
\ifthenelse{\boolean{showedits}}
{\added[id=EM]{#1}}
{\!#1\hspace{-4.75pt}}
}
\newcommand{\repem}[2]{
\ifthenelse{\boolean{showedits}}
{\replaced[id=EM]{#1}{#2}}
{\!#1\hspace{-4.75pt}}
}
\newcommand{\dem}[1]{
\ifthenelse{\boolean{showedits}}
{\deleted[id=EM]{#1}}
{}
}
\DeclareSIUnit[]{\pu}{p.u.}
\DeclareSIUnit[]{\VA}{VA}
\newtheoremstyle{bfnote}%
{}{}%
{\itshape}{}%
{\bfseries}{.}%
{ }%
{\thmname{#1}\thmnumber{ #2}\thmnote{ (#3)}}
\theoremstyle{bfnote}
\newtheorem{thm}{Theorem}
\DeclareSymbolFont{bbold}{U}{bbold}{m}{n}
\DeclareSymbolFontAlphabet{\mathbbold}{bbold}
\setlist[enumerate]{leftmargin=*}
\setlist[itemize]{leftmargin=*}
\begin{document}

\title{Storage-Based Frequency Shaping Control}
%

\author{Yan Jiang, Eliza Cohn, Petr Vorobev,~\IEEEmembership{Member,~IEEE}, and Enrique Mallada,~\IEEEmembership{Senior Member,~IEEE}
}
\renewcommand{\baselinestretch}{.95}


\maketitle
 
\begin{abstract}
With the decrease in system inertia, frequency security becomes an issue for power systems around the world. Energy storage systems (ESS), due to their excellent ramping capabilities, are considered as a natural choice for the improvement of frequency response following major contingencies. In this manuscript, we propose a new strategy for energy storage -- \emph{frequency shaping control} -- that allows to completely eliminate the frequency Nadir, one of the main issue in frequency security, and at the same time tune the rate of change of frequency (RoCoF) to a desired value. With Nadir eliminated, the frequency security assessment can be performed via simple algebraic calculations, as opposed to dynamic simulations for conventional control strategies. Moreover, our proposed control is also very efficient in terms of the requirements on storage peak power, requiring up to $40\%$ less power than conventional virtual inertia approach for the same performance.   
\end{abstract}

\begin{IEEEkeywords}
Electric storage, frequency control, frequency Nadir, rate of change of frequency, low-inertia power systems.
\end{IEEEkeywords}

\IEEEpeerreviewmaketitle


\section{Introduction}

The reduction of system inertia, caused by the replacement of conventional synchronous generation with renewable energy sources, is one of the biggest challenges for frequency control in power systems \cite{milano2018}. Lower inertia causes larger frequency deviations during transients, even if the system has adequate primary reserves to keep the steady-state frequency deviation within acceptable limits \cite{puschel2017mapping}. The so-called frequency Nadir -- the lowest value of the frequency during transients -- can become unacceptable for low-inertia systems, which in turn is sometimes regarded as the main reason for limiting further increase of renewable generation penetration \cite{eirgrid2010all, o2014studying}. Fortunately, the recent advancements in power electronics and electric storage technologies provide the potential to mitigate this issue through the use of inverter-interfaced storage units that can provide additional frequency response. With proper controllers, fast inverter dynamics can ensure the rapid response from storage devices. 

A straightforward control approach for energy storage systems (ESS) is to let energy storage units provide simple proportional power-frequency response similar to conventional synchronous generators \cite{sanchez2019security}. However, unlike synchronous generators that produce a delayed response to the control signal, the response of storage units is almost instantaneous. This can help arrest the frequency drop during the first few seconds following a disturbance, while generator turbines are gradually increasing their power output. Moreover, because of the absence of delays, smaller droop coefficients (larger gains) are accessible for energy storage units, which makes them even more efficient during sudden frequency disturbances \cite{greenwood2017frequency}. Thus, an impressive \SI{472}{\mega\watt} of storage has been reported to participate in the frequency response during the recent blackout in the Great Britain system on August $9$, $2019$ \cite{eso2019technical}. A drawback of this droop control strategy is that the storage units will continue to provide their response as long as the system frequency is away from its nominal value, which can lead to rather high requirements on storage capacity.  

Another common control approach is the so-called ``synthetic inertia" (also referred to as ``virtual inertia" (VI)), where energy storage units imitate the natural inertial response of synchronous machines, thus compensating for the lack of physical inertia \cite{fang2017distributed,tielens2016relevance}. Such a control strategy is especially efficient in reducing the frequency Nadir as well as the initial RoCoF, following sudden power imbalances. The topic is widely discussed in literature. We will provide a brief survey of the most relevant sources, yet a comprehensive review is given in \cite{tamrakar2017virtual}. There are various approaches for VI implementation: by wind turbines \cite{arani2012implementing}, by electric vehicles \cite{almeida2015electric}, by distributed energy resources \cite{guggilam2018optimizing}, and by controlling DC-side capacitors of grid-connected power converters \cite{fang2017distributed}. In a recent paper  \cite{poolla2019placement} an important question of VI placement is discussed. 
Finally, we note that both synthetic inertia (derivative control) and droop (proportional control) can be combined into a single control strategy. Sometimes, it is this combined strategy that is referred to as VI.

It is evident that, in many of the power systems around the world, storage facilities can become the main tool for executing frequency control, especially following contingencies, where speed or response is of vital importance. While both synthetic inertia and droop response can be rather effective in improving the frequency transient performance, energy storage units have the potential of implementing a much wider class of control strategies. A high level goal for such strategies would be to provide certain frequency response while minimizing the cost of storage units. The later is mostly determined by the energy and power capacity of storage units required to execute certain strategy. In the present manuscript, we develop a novel control strategy -- \emph{frequency shaping control} -- that guarantees frequency transients without Nadir, while at the same time keeping RoCoF and steady-state frequency deviation within pre-specified limits. We emphasize that eliminating the frequency Nadir means much more than just improving the transient frequency response: it allows to completely change the frequency security assessment procedure by reducing it to simple algebraic operations, rather than dynamic simulations. 

The main contributions of the manuscript are as follows: 

\begin{enumerate}
    \item We analyze the performance of traditional VI control and show its drawbacks, especially in terms of excessive control effort required from the storage.
    
    \item We propose a new control strategy for storage -- \emph{frequency shaping control} -- that allows to turn the system frequency dynamics into a first-order one, thus eliminating frequency Nadir. We show that this strategy requires up to $40\%$ less storage power capacity compared to conventional VI. 
    
    \item We generalize our control strategy for multi-machine and multiple-area systems with arbitrary governor models and show how it can be tuned to satisfy constraints on RoCoF and steady-state frequency deviation.
\end{enumerate}

\section{Modelling Approach and Problem Statement}\label{Sec:modelling}

\subsection{System Model}

We start by considering dynamics of a center of inertia (COI) of a single-area power system so that the whole system can be modelled as an equivalent synchronous machine. Such a representation is proven to be sufficiently accurate for many practical systems \cite{o1999modelling,lalor2005impact,ulbig2014impact}. The generalization to multi-machine and multiple-area representation will be described in Section~\ref{sec:multiple}. Frequency dynamics of such a system can be described by the conventional swing equation:
\begin{equation}\label{eq:freq_exact}
  \frac{2H}{\Omega_0} \dot{\Omega}=P_\mathrm{m}-P_\mathrm{L} +P_\mathrm{b} \,, 
\end{equation}
where $H$ is the combined inertia constant of the system (in second), $\Omega$ is the system frequency (in radian per second), $\Omega_0:=2\pi F_0$ is the nominal frequency ($F_0=\SI{50}{\hertz}$), $P_\mathrm{m}$ is the total mechanical power supplied to the system (in per unit), $P_\mathrm{L}$ is the total load demand (in per unit), and $P_\mathrm{b}$ is the total power supplied by the storage system (in per unit), which also includes any frequency control actions. In this manuscript, we will be mostly interested in dynamics of the system \eqref{eq:freq_exact} subject to a sudden power imbalance $\Delta P$. 


In order to study the frequency dynamics, it is convenient to consider the deviations of all the variables from their equilibrium values. Thus, we will denote as $\omega$ the per unit deviation of frequency from its nominal value, i.e., 
\begin{equation}\label{eq:omega_definition}
    \omega = \frac{\Omega-\Omega_0}{\Omega_0}\,.
\end{equation}
Likewise, $p_\mathrm{m}$, $p_\mathrm{L}$, and $p_\mathrm{b}$ will be used to denote the per unit variations of mechanical power input, electric power demand, and storage power output from their respective nominal values. With such denotations, the frequency dynamics of the system under study can be described by the following equations: 
\begin{subequations}\label{eq:dyn-onebus}
\begin{align}
    \dot{\theta} =&\ \Omega_0 \omega\;,\label{eq:dyn_onebus_phase} \\
    2H \dot{\omega} =&\ p_\mathrm{m} - p_\mathrm{L} -  \alpha_\mathrm{L}\omega + p_\mathrm{b}\;,\label{eq:dyn_onebus_freq}\\
    \tau_\mathrm{T} \dot{p}_\mathrm{m} =& -p_\mathrm{m} - \alpha_\mathrm{g} \omega - K_\mathrm{I}\frac{\theta}{\Omega_0}\;, \label{eq:tur-dyan}\\
      \dot{E}_\mathrm{b} =&\ p_\mathrm{b}\,,
\end{align}
\end{subequations}
where $E_\mathrm{b}$ is the energy supplied by storage and $\theta$ is an auxiliary variable used for the secondary frequency control. The parameters in \eqref{eq:dyn-onebus} are defined as follows: $\tau_\mathrm{T}$ --  turbine time constant (in second), $\alpha_\mathrm{L}$ -- the load-frequency sensitivity coefficient  (in per unit), $\alpha_\mathrm{g}$  -- the aggregate inverse droop of generators (in per unit), and $K_\mathrm{I}$ -- the aggregate secondary frequency control gain of the system (in per unit per second).

The model described by \eqref{eq:dyn-onebus} is shown by a block diagram in Fig.~\ref{fig:1bus-model}. The conventional generator block (with primary and secondary controls) is shaded in blue. We will denote the aggregate transfer function of this block (in Laplace domain) as $\hat{g}(s)$. For clarity of our derivations, we use a simplified first-order turbine representation. The generalization to more complex models will be provided in Section \ref{sec:multiple}.

Compared to conventional generators, inverter-interfaced storage have much faster dynamic response rates (few decades of milliseconds) that allow for more control flexibility. Thus, at the timescales of frequency dynamics, we can assume that storage can provide any shape of power response (within the installed capacity capability). We denote the storage frequency response function as $\hat{c}(s)$, i.e.,  
\begin{equation}\label{eq:inverter-tf}
\hat{p}_\mathrm{b}(s)=\hat{c}(s)\hat \omega(s)\,.
\end{equation}
The detailed form of $\hat{c}(s)$ depends on a chosen control strategy. 


\begin{figure}[!t]
\centering
\includegraphics[width=0.65\columnwidth]{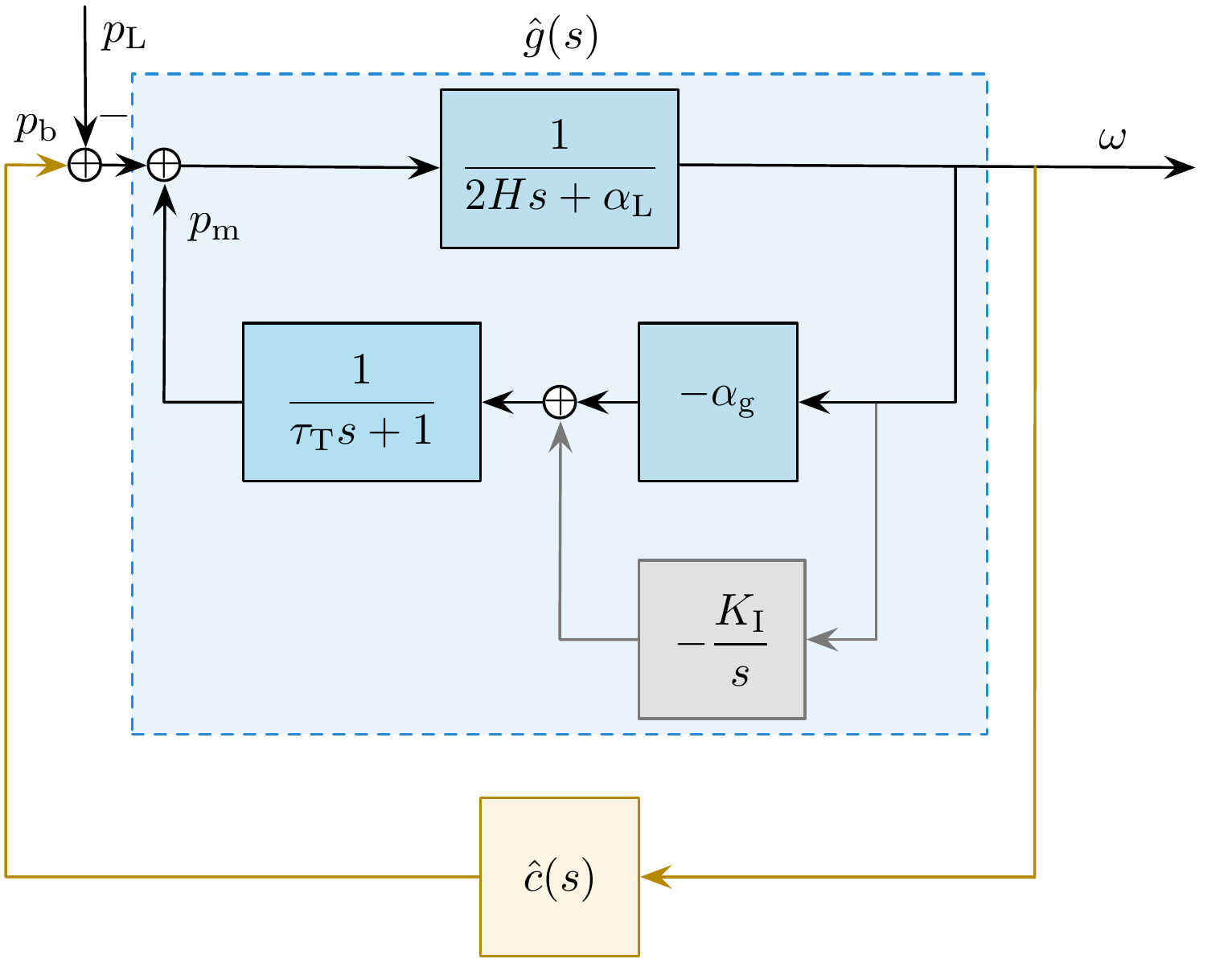}
\caption{Block diagram of an aggregated power system with frequency control from generators and storage units.}
\label{fig:1bus-model}
\end{figure}

For most of the manuscript, we will use the system parameters corresponding to the Great Britain system \cite{NG2016,NG2016standard}. We will use $P_\mathrm{B}=\SI{32}{\giga\VA}$ as a base power, and the value of the maximum power imbalance $\Delta P= \SI{1.8}{\giga\watt}$ corresponding to the loss of two biggest generation units, as specified in \cite{NG2016standard}. Under the high renewable penetration scenario, the total inertia of the system is expected to be around $\SI{70}{\giga\VA\second}$ which corresponds to $H=\SI{2.19}{\second}$ on the system base. The rest of the parameters are: $\tau_\mathrm{T}=\SI{1}{\second}$, $\alpha_\mathrm{g}=\SI{15}{pu}$, $\alpha_\mathrm{L}=\SI{1}{pu}$, and $K_\mathrm{I}=\SI{0.05}{pu\per\second}$.

\subsection{Performance Assessment of Frequency Control}

Since frequency deviation is volatile in a low-inertia power system, it is necessary to resort to certain measures to ensure frequency security, especially following major disturbances. Notably, for storage-based frequency control strategy design, not only control performance but also economic factors matter. Therefore, the performance metrics that are of our interest for comparing different control strategies are twofold: frequency response metrics and storage economics metrics. 

\subsubsection{Frequency Response Metrics}  The factors that are relevant to frequency security are:
\begin{itemize}
\item \emph{Steady-state frequency deviation} is the deviation of frequency from the nominal value after all the primary response is activated, i.e.,
\begin{equation}
     \Delta\omega:=\lim_{t\rightarrow\infty}\omega(t)\quad\text{with}\quad K_\mathrm{I}=0\,.
\end{equation}
The maximum allowed quasi-steady-state frequency deviation for the European and Great Britain systems is \SI{\pm200}{\milli\hertz} \cite{UCTLbook,NG2016standard}.
    \item \emph{Nadir} is the maximum frequency drop during a transient response, i.e., 
\begin{equation}\label{eq:Nadir}
	|\omega|_\infty := \max_{t\geq0} |\omega(t)|\,.
\end{equation}
For example, $|\omega|_\infty= \SI{800}{\milli\hertz}$ for the European system \cite{knap2015sizing} and $|\omega|_\infty= \SI{500}{\milli\hertz}$ for the Great Britain system \cite{NG2016,NG2016standard}. For microgrids, the maximum allowed frequency drop can be specified either by state standards or by some technical rules specific to the microgrid.    
    \item \emph{RoCoF} is the maximum rate of change of frequency, which usually occurs at the initial time instant, i.e.,
\begin{equation}\label{eq:rocof}
	|\dot{\omega}|_\infty := \max_{t\geq0}|\dot{\omega}(t)|\,.
\end{equation}
The highest RoCoF value allowed in the European system is $|\dot{\omega}|_\infty=\SI{0.5}{\hertz\per\second}$.

\end{itemize}

\subsubsection{Storage Economics Metrics} The two factors that significantly affect the cost of storage units are:
\begin{itemize}
\item \emph{Energy capacity} is the maximum amount of energy supply from storage during the whole transient duration, i.e.,
\begin{equation}
E_\mathrm{b,max} := \max_{t\geq0} E_\mathrm{b}(t)\,.    
\end{equation}
The maximum amount of energy supply directly determines the required storage capacity which, at present, represents the main contribution to the overall cost of storage systems.  
\item \emph{Maximum power} is the maximum amount of power output from storage during the whole transient duration, i.e.,
\begin{equation}
p_\mathrm{b,max} := \max_{t\geq0} p_\mathrm{b}(t)\,.    
\end{equation}
The maximum power output of the storage unit is also important since lower values of it mean that one can use inverters with lower installed capacity. 
\end{itemize}


\section{Conventional Control Strategy for Storage}
In this section we briefly analyze the most common traditional control strategy for storage-based frequency response  -- ``virtual inertia" (VI). We are mostly interested in its performance from the point of view of improving the Nadir and RoCoF, and we will assess the amount of power and energy required from storage to achieve certain performance level in the system. 

The most common VI strategy includes both inertial response (IR) and power-frequency response (PFR). It can be represented by the following effective storage transfer function $\hat{c}_\mathrm{vi}(s)$:
\begin{align}\label{vimain}
    \hat{c}_\mathrm{vi}(s):=-(m_\mathrm{v}s+\alpha_\mathrm{b})\;,\footnotemark[1]
\end{align}
where $m_\mathrm{v}$ is the IR constant and  $\alpha_\mathrm{b}$ is the PFR constant. We will use the subscript ``vi" to refer to this type of control strategy from storage. 
\footnotetext[1]{An additional low-pass filter is needed to make this transfer function proper. In all of our numerical models, we will use a low-pass filter with a cut-off frequency of \SI{5}{\hertz}. For simplicity we will omit it in formulas.}

For most of the analysis in the manuscript, when deriving analytic expressions, we will omit the secondary control since its purpose is to gradually drive the frequency back to nominal following a contingency and it does not significantly influence the transient frequency dynamics. Likewise, the load-frequency sensitivity coefficient $\alpha_\mathrm{L}$ will also be set to zero when deriving control laws -- this coefficient is typically of the order of unity for most power systems and its exact values are usually unknown. Setting $\alpha_\mathrm{L}$ to zero will make our results slightly conservative. We note though that all dynamic simulations will be done with secondary control and load-frequency sensitivity coefficient present.

\subsection{Coupled Nadir Elimination and RoCoF Tuning}

Under VI control strategy \eqref{vimain}, the frequency deviation $\omega_\mathrm{vi}$ following a step power imbalance in Laplace domain is given by
\begin{align}\label{eq:omegavi_main}
     \hat \omega_\mathrm{vi}(s) =\cfrac{-\Delta P\left(\tau_\mathrm{T} s + 1\right)}{s\left[\tilde{m}\tau_\mathrm{T}s^2+\left(\tilde{m}+\tau_\mathrm{T}\alpha_\mathrm{b}\right)s+\alpha_\mathrm{tot} \right]}\,,
\end{align}
where $\tilde{m}:=2H+m_\mathrm{v}$ is the compensated system inertia and $\alpha_\mathrm{tot} := \alpha_\mathrm{g}+\alpha_\mathrm{b}$ is the aggregate inverse droop of the system. 

By final value theorem, the steady-state frequency deviation is always determined by the aggregate inverse droop of the system as
\begin{align}\label{eq:fre-ss-vi}
\Delta\omega_{\mathrm{vi}} =\lim_{s\to0}s\hat \omega_\mathrm{vi}(s)=-\frac{\Delta P}{\alpha_\mathrm{tot}}\,.
\end{align}
With our present choice of values of parameters from the Great Britain system, the generator droop alone allows to fulfill the requirement on steady-state frequency deviation. This is easily seen by setting $\alpha_\mathrm{b}=0$ in \eqref{eq:fre-ss-vi}, which yields $\omega_{\infty,\mathrm{vi}}=\SI{-187.5}{\milli\hertz}$, a value within the range of \SI{\pm200}{\milli\hertz}. In contrast to the irrelevance of the steady-state frequency to the IR constant $m_{\mathrm{v}}$, the Nadir and RoCoF significantly depend on the choice of $m_{\mathrm{v}}$. As seen in Fig.~\ref{fig:fre-vi}, greater values of $m_{\mathrm{v}}$ will lead to decreases of both Nadir and RoCoF.
\begin{figure}[t!]
\centering
\includegraphics[width=0.72\columnwidth]{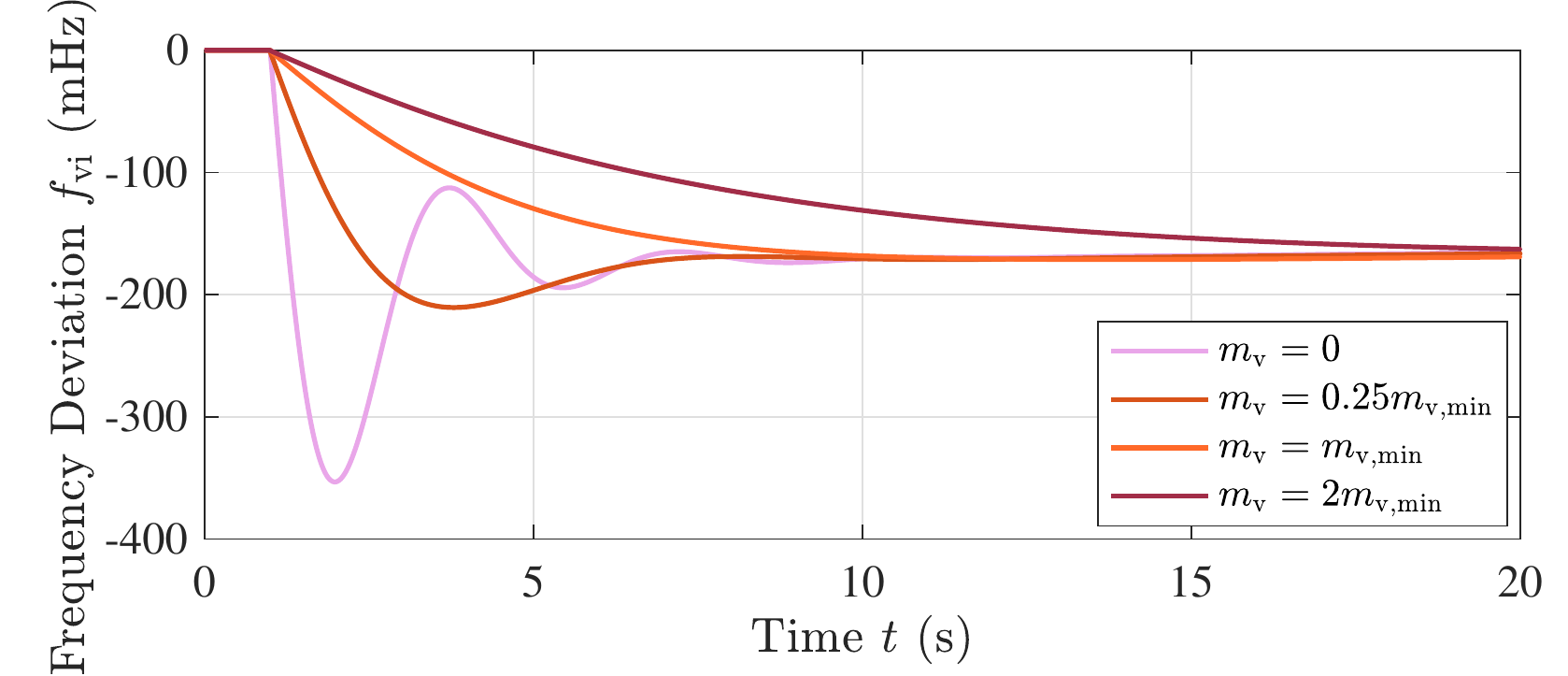}
\caption{Frequency deviations under virtual inertia control from storage with $\alpha_\mathrm{b}=0$ and different values of  $m_{\mathrm{v}}$.}
\label{fig:fre-vi}
\end{figure}

From Laplace domain expression \eqref{eq:omegavi_main}, analytic expression for $\omega_\mathrm{vi}(t)$  can be obtained following standard steps. The resulting expression is, however, rather cumbersome, thus we do not present it here explicitly. In order to find the frequency Nadir $|\omega|_{\infty,\mathrm{vi}}$ for arbitrary values of $m_\mathrm{v}$ and $\alpha_\mathrm{b}$, one needs to find the value of this function $\omega_\mathrm{vi}(t)$ at the time instant corresponding to its first minimum. This can be done by following the standard but unwieldy steps, the details of which can be found in \cite[Theorem  4]{jiang2019dynamic}.  Fig.~\ref{fig:nadirrocofvsIR} provides a plot of $|\omega|_{\infty,\mathrm{vi}}$ as a function of $m_{\mathrm{v}}$. It is evident that, for any given value of $\alpha_\mathrm{b}$, the Nadir gets shallower with the increase of the IR constant $m_{\mathrm{v}}$. Moreover, one can entirely remove Nadir by tuning $m_{\mathrm{v}}$ to be sufficiently large, where the critical value of $m_{\mathrm{v}}$ is determined by the following theorem.\footnotemark[2]\footnotetext[2]{In this manuscript, ``remove the Nadir" means ``remove the frequency response overshoot".}

\begin{figure}[t!]
\centering
\includegraphics[width=0.72\columnwidth]{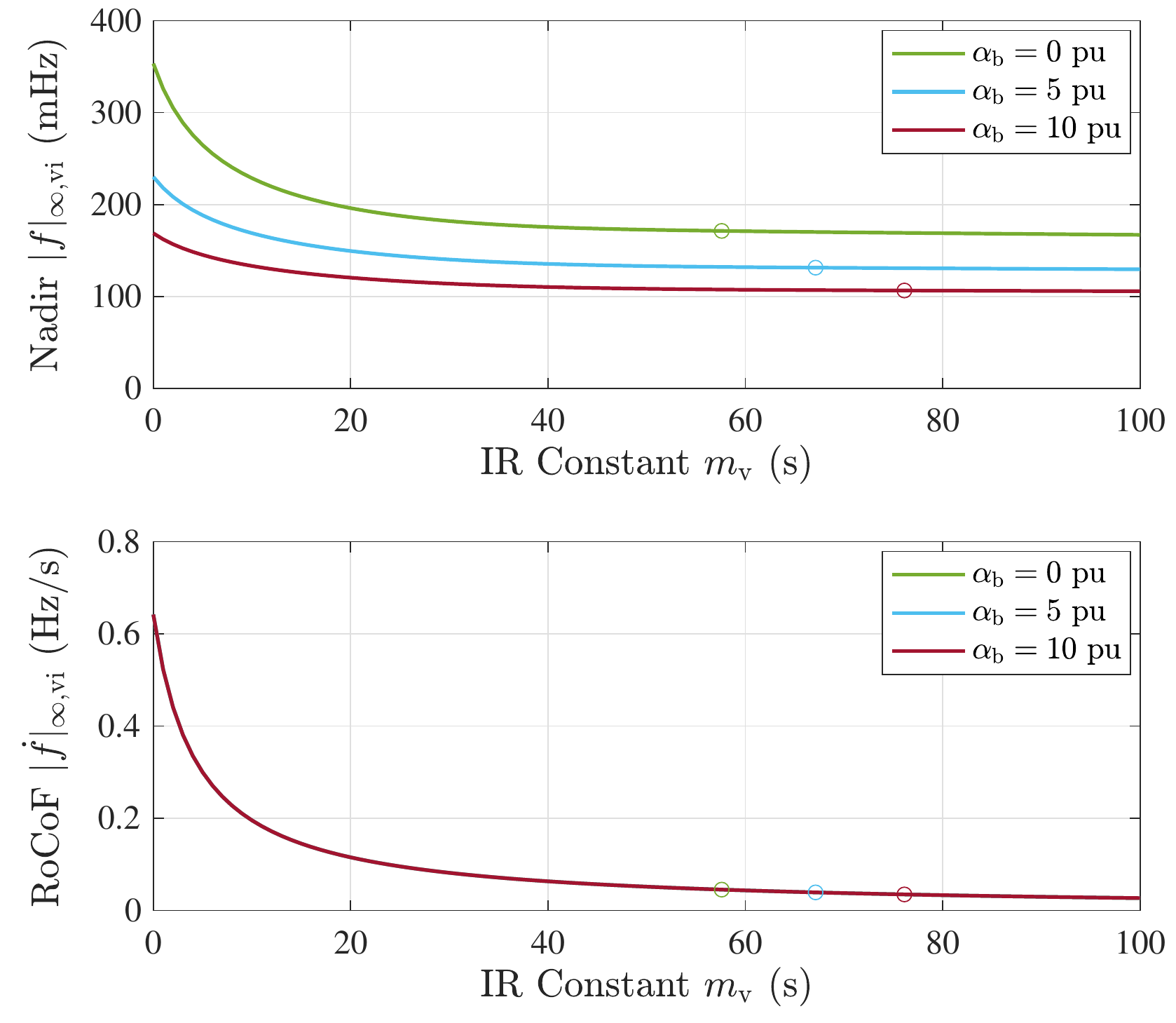}
\caption{Effect of the IR constant on Nadir and RoCoF for different chosen values of $\alpha_\mathrm{b}$, where circles denote the points corresponding to $m_\mathrm{v}=m_\mathrm{v,min}$.}
\label{fig:nadirrocofvsIR}
\end{figure}

\begin{thm}[Critical value of virtual inertia for removing Nadir] For a single-area power system described by \eqref{eq:dyn-onebus} and \eqref{eq:inverter-tf}, the step response under VI control, i.e., $\hat{c}(s)=\hat{c}_\mathrm{vi}(s)$, has no Nadir if 
\begin{align}\label{eq:mvmin}
   m_{\mathrm{v}} \geq m_\mathrm{v,min} :=\tau_\mathrm{T}\beta^2-2H\,,
\end{align}
with $\beta:=\sqrt{\alpha_\mathrm{g}}+\sqrt{\alpha_\mathrm{tot}}$.
\end{thm}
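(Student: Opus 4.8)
The plan is to pass from the Laplace-domain formula~\eqref{eq:omegavi_main} to the time domain and prove that, whenever~\eqref{eq:mvmin} holds, $\omega_\mathrm{vi}(t)$ is monotonically decreasing on $[0,\infty)$; since ``removing the Nadir'' here means removing the frequency-response overshoot, monotonicity is exactly what has to be shown. Abbreviate $\tilde m:=2H+m_\mathrm{v}$ and let $Q(s):=\tilde m\tau_\mathrm{T}s^2+(\tilde m+\tau_\mathrm{T}\alpha_\mathrm{b})s+\alpha_\mathrm{tot}$ be the quadratic factor in the denominator of~\eqref{eq:omegavi_main}. Apart from the integrator pole at $s=0$, which contributes only the steady-state term $-\Delta P/\alpha_\mathrm{tot}$ from~\eqref{eq:fre-ss-vi}, the transient is governed by the two roots of $Q$ and by the single numerator zero at $s=-1/\tau_\mathrm{T}$.

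\emph{Step 1: locating the poles.} I would compute the discriminant of $Q$, namely $(\tilde m+\tau_\mathrm{T}\alpha_\mathrm{b})^2-4\tilde m\tau_\mathrm{T}\alpha_\mathrm{tot}$, and read it as a quadratic in $\tilde m$. Substituting $\alpha_\mathrm{b}=\alpha_\mathrm{tot}-\alpha_\mathrm{g}$ and using the identity $(\alpha_\mathrm{g}+\alpha_\mathrm{tot})^2-\alpha_\mathrm{b}^2=4\alpha_\mathrm{g}\alpha_\mathrm{tot}$, this quadratic factors as $\bigl(\tilde m-\tau_\mathrm{T}(\sqrt{\alpha_\mathrm{tot}}-\sqrt{\alpha_\mathrm{g}})^2\bigr)\bigl(\tilde m-\tau_\mathrm{T}\beta^2\bigr)$. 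Hence the hypothesis $m_\mathrm{v}\ge m_\mathrm{v,min}$, i.e.\ $\tilde m\ge\tau_\mathrm{T}\beta^2$, makes the discriminant nonnegative, so $Q$ has two real roots $-\lambda_1,-\lambda_2$ with $0<\lambda_1\le\lambda_2$. I would then verify that the numerator zero lies to the left of both poles, i.e.\ $\lambda_2<1/\tau_\mathrm{T}$ (with $-\lambda_2$ the more negative root): because $\tilde m\ge\tau_\mathrm{T}\beta^2>\tau_\mathrm{T}\alpha_\mathrm{b}$, writing the root explicitly and clearing the positive radical reduces $\lambda_2<1/\tau_\mathrm{T}$ to the true inequality $\alpha_\mathrm{g}>0$. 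The ordering $\lambda_1\le\lambda_2<1/\tau_\mathrm{T}$ is the structural fact on which the rest depends.

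\emph{Step 2: monotonicity.} A partial-fraction expansion of~\eqref{eq:omegavi_main} yields $\omega_\mathrm{vi}(t)=-\Delta P/\alpha_\mathrm{tot}+B_1e^{-\lambda_1 t}+B_2e^{-\lambda_2 t}$, and evaluating the residues and differentiating collapses the derivative to
\[
  \dot\omega_\mathrm{vi}(t)=\frac{-\Delta P}{\tilde m\tau_\mathrm{T}(\lambda_2-\lambda_1)}\Bigl[(1-\tau_\mathrm{T}\lambda_1)e^{-\lambda_1 t}-(1-\tau_\mathrm{T}\lambda_2)e^{-\lambda_2 t}\Bigr].
\]
Since $0<\lambda_1\le\lambda_2<1/\tau_\mathrm{T}$, the coefficients satisfy $1-\tau_\mathrm{T}\lambda_1\ge 1-\tau_\mathrm{T}\lambda_2>0$ while $e^{-\lambda_1 t}\ge e^{-\lambda_2 t}$ for all $t\ge0$; therefore the bracket is nonnegative, the prefactor is negative, and $\dot\omega_\mathrm{vi}(t)\le0$ throughout $[0,\infty)$. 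As $\omega_\mathrm{vi}(0)=0$ and $\omega_\mathrm{vi}(t)\to-\Delta P/\alpha_\mathrm{tot}$, this is a monotone descent with no overshoot. The coincident-root boundary $\tilde m=\tau_\mathrm{T}\beta^2$ is handled by a short limiting argument (the quotient above tends to $e^{-\lambda t}\bigl(\tau_\mathrm{T}+t(1-\tau_\mathrm{T}\lambda)\bigr)\ge0$) or, more cheaply, by continuity of $t\mapsto\omega_\mathrm{vi}(t)$ in $\tilde m$, since a uniform limit of nonincreasing functions is nonincreasing.

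\emph{Main obstacle.} The discriminant factorization is routine algebra; the delicate point is the pole/zero placement in Step~1, where $\alpha_\mathrm{g}>0$ is what forces the numerator zero to lie to the left of \emph{both} real poles. If two real poles straddled $-1/\tau_\mathrm{T}$, the bracket in $\dot\omega_\mathrm{vi}$ would change sign for some $t>0$ and an overshoot would reappear, so pinning the zero on the correct side is the crux; once that is done, the monotonicity conclusion is merely a term-by-term comparison of two decaying exponentials.
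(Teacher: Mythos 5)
Your proof is correct, and it takes a genuinely different route from the paper's. The paper's proof is essentially a two-line reduction: it invokes \cite[Theorem 4]{jiang2019dynamic} to obtain the two inequalities in \eqref{eq:-nadirless-m-con}, solves the first as a quadratic in $\tilde m$, and uses the second to discard the lower root region. You instead rederive the whole mechanism from scratch in the time domain, and in doing so you expose exactly where those two cited conditions come from: your discriminant of $Q(s)$, namely $(\tilde m+\tau_\mathrm{T}\alpha_\mathrm{b})^2-4\tilde m\tau_\mathrm{T}\alpha_\mathrm{tot}=\tilde m^2-2\tau_\mathrm{T}(\alpha_\mathrm{b}+2\alpha_\mathrm{g})\tilde m+\tau_\mathrm{T}^2\alpha_\mathrm{b}^2$, is literally the first inequality of \eqref{eq:-nadirless-m-con} (real closed-loop poles), while the second inequality $\tilde m\geq\tau_\mathrm{T}\alpha_\mathrm{b}$ is precisely what your Step 1 needs so that $\sqrt{D}<\tilde m-\tau_\mathrm{T}\alpha_\mathrm{b}$ can be squared, pinning the numerator zero $-1/\tau_\mathrm{T}$ to the left of both poles. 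Your factorization of the discriminant, the residue computation giving $\dot\omega_\mathrm{vi}(t)=\frac{-\Delta P}{\tilde m\tau_\mathrm{T}(\lambda_2-\lambda_1)}\bigl[(1-\tau_\mathrm{T}\lambda_1)e^{-\lambda_1 t}-(1-\tau_\mathrm{T}\lambda_2)e^{-\lambda_2 t}\bigr]$, and the term-by-term comparison all check out (note your formula correctly reproduces $\dot\omega_\mathrm{vi}(0)=-\Delta P/\tilde m$, consistent with \eqref{eq:rocof-vi}), and the repeated-root boundary is handled soundly. What your approach buys is self-containedness and an interpretable pole--zero picture of why Nadir disappears; what the paper's approach buys is brevity at the cost of an external dependency. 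Both land on the same admissible region $\tilde m\geq\tau_\mathrm{T}\beta^2$.
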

\begin{proof} Nadir occurs only if there exists some non-negative finite time instant $t_\mathrm{nadir}$ at which $\dot{\omega}_\mathrm{vi}(t_\mathrm{nadir})=0$. Therefore, a condition on $m_\mathrm{v}$ ensuring $\dot{\omega}_\mathrm{vi}(t)=0$ only when $t=\infty$ suffices to remove Nadir. Applying \cite[Theorem  4]{jiang2019dynamic} to \eqref{eq:omegavi_main}, we find that the VI control parameters $(\alpha_\mathrm{b}, m_{\mathrm{v}})$ should satisfy the following relations:
\begin{align}\label{eq:-nadirless-m-con}
\begin{cases}
    \tilde{m}^2-2\tau_\mathrm{T} \left(\alpha_\mathrm{b} +2\alpha_\mathrm{g}\right)\tilde{m}+\tau_\mathrm{T}^{2}\alpha_\mathrm{b}^2 \geq 0\\
    \tilde{m} - \tau_\mathrm{T} \alpha_\mathrm{b} \geq 0 
    \end{cases}\,.
\end{align}
The first quadratic inequality in $\tilde{m}$ above holds if $\tilde{m} \geq 
    \tau_\mathrm{T} \left(\sqrt{\alpha_\mathrm{g}}+\sqrt{\alpha_\mathrm{tot}}\right)^2$ or $\tilde{m} \leq
    \tau_\mathrm{T} \left(\sqrt{\alpha_\mathrm{g}}-\sqrt{\alpha_\mathrm{tot}}\right)^2$. 
However, only the former region satisfies the second condition in \eqref{eq:-nadirless-m-con}. This concludes the proof of the desired result.
\end{proof}

If it is recognized that $\alpha_\mathrm{b}$ is much smaller than $\alpha_\mathrm{g}$, an approximate expression for $m_\mathrm{v,min}$ in \eqref{eq:mvmin} can be obtained:
\begin{align}\label{eq:mvmin-linear}
   m_\mathrm{v,min} =2\tau_\mathrm{T} (2\alpha_\mathrm{g}+\alpha_\mathrm{b})- 2H \,.
\end{align}

The significance of eliminating the Nadir lies in the fact that the frequency security of the system can be certified by performing only simple \emph{algebraic} calculations so as to avoid running explicit \emph{dynamic} simulations. More precisely, given the expected maximum magnitude of power imbalance $\Delta P$ and the acceptable value for steady-state frequency deviation $\Delta\omega_{ \mathrm{d}}$, one can simply choose:
\begin{align}\label{eq:nonadir-tune}
\alpha_\mathrm{b}=&\ \max \left(\left|\frac{\Delta P}{\Delta\omega_{\mathrm{d}}}\right| - \alpha_\mathrm{g}, 0\right)\quad\text{and}\quad m_\mathrm{v}= m_\mathrm{v,min}\,.
\end{align}

Once both $m_\mathrm{v}$ and $\alpha_\mathrm{b}$ are determined, one can calculate RoCoF as:
\begin{align}
    |\dot{\omega}|_{\infty,\mathrm{vi}} =& \left|\lim_{s\to\infty} s^2\hat \omega_\mathrm{vi}(s)\right| =  \frac{\Delta P}{\tilde{m}}\,,\label{eq:rocof-vi}
\end{align}
which is inversely proportional to the compensated system inertia. Thus, under VI control, there is a coupling between Nadir elimination and RoCoF tuning. If one adopts the tuning given in \eqref{eq:nonadir-tune}, then the RoCoF is fixed to be $\Delta P /(2H+m_\mathrm{v,min})$. Yet, the value of $m_\mathrm{v,min}$ required usually has rather high values, which leads to a too small RoCoF and thus a very long settling time, as shown in Fig.~\ref{fig:fre-vi}. If one hopes to adjust the RoCoF appropriately so as to let the frequency evolves with a moderate rate, then the Nadir cannot be removed. We also note that, according to \eqref{eq:mvmin-linear}, $m_\mathrm{v,min}$ is very sensitive to turbine time-constant, and the values shown in Fig.~\ref{fig:nadirrocofvsIR} correspond to a rather modest value of $\tau_\mathrm{T}=\SI{1}{\second}$. For slower governors and turbines, the requirements on $m_\mathrm{v,min}$ will be even higher.

\subsection{Power and Energy Requirements on Storage }\label{sec:VI_power}
In order to quantify the required amount of power rating of the storage unit for a given control strategy $\hat{c}_\mathrm{vi}(s)$, one needs to find the maximum of $p_\mathrm{b}(t)$ during the whole transient. The procedure is rather straightforward, but the explicit expression for $p_\mathrm{b}(t)$ in time domain is very cumbersome for arbitrary values of $m_\mathrm{v}$ and $\alpha_\mathrm{b}$. The top panel of Fig.~\ref{fig:saturation-vi} shows the maximum storage power as a function of IR constant $m_\mathrm{v}$ for different values of PFR constant $\alpha_\mathrm{b}$ (for power/energy requirement figures we use per unit system based on a disturbance size for simpler comparison). Comparing with the top panel in Fig.~\ref{fig:nadirrocofvsIR}, we observe that, for $m_\mathrm{v}$ less than $m_\mathrm{v,min}$, the Nadir and power rating are quite sensitive to variations in $m_\mathrm{v}$, yet, for $m_\mathrm{v}$ greater than $m_\mathrm{v,min}$, they are practically insusceptible to changes in $m_\mathrm{v}$. This implies that $m_\mathrm{v}=m_\mathrm{v,min}$ plays the role of a saturation point after which an increase in power rating of storage system does not provide any benefit to a decrease in Nadir. We note, though, that the actual values of $m_\mathrm{v,min}$ correspond to huge additional inertia -- about ten times more than the system natural inertia, and. 

It is evident from Fig.~\ref{fig:saturation-vi} that, for the special case of $m_\mathrm{v}=m_\mathrm{v,min}$, the maximum power $p_\mathrm{b,max}$ is almost independent of the PFR constant $\alpha_\mathrm{b}$, so we can set it to zero to simplify the expression for $p_\mathrm{b}(t)$. Thus, for the case $m_\mathrm{v}=m_\mathrm{v,min}$ and $\alpha_\mathrm{b}=0$, one has the following expression:  
\begin{align}\label{eq:p_of_t_vi}
p_\mathrm{b,vi}(t)=\Delta P\left(1-\frac{H}{2\tau_\mathrm{T}\alpha_\mathrm{g}}\right)\left(1+\frac{t}{2\tau_\mathrm{T}}\right) e^{-\frac{t}{2\tau_\mathrm{T}}}\,,
\end{align}
for which the maximum value occurs at $t=0$ and is equal to $p_\mathrm{b,max}=\Delta P \left[1-H/(2\tau_\mathrm{T}\alpha_\mathrm{g})\right]$ and is very close to the disturbance size $\Delta P$ for realistic values of parameters.

\begin{figure}[t!]
\centering
\includegraphics[width=0.72\columnwidth]{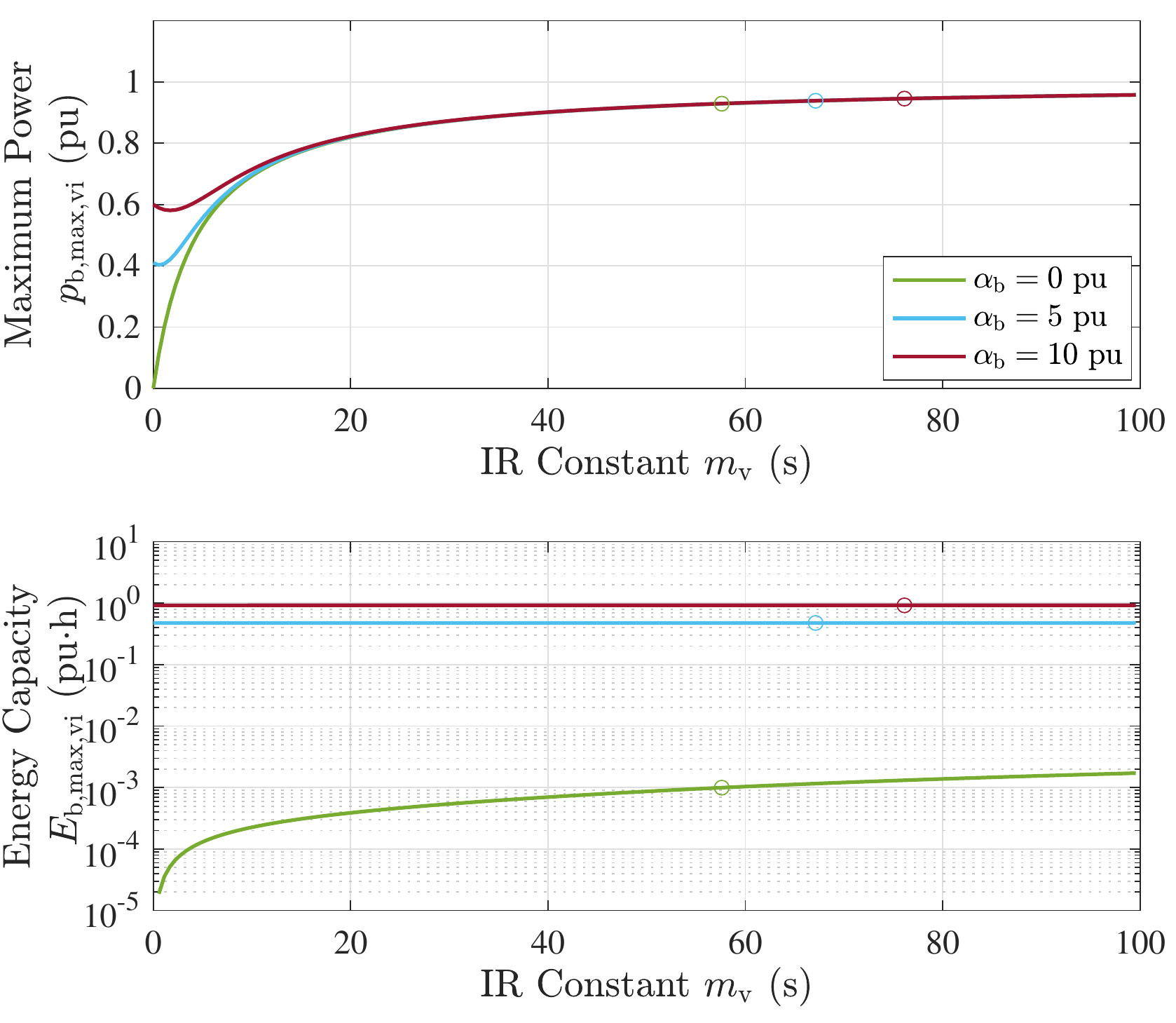}
\caption{Effect of the IR constant on maximum power and energy capacity requirements, where circles denote the points corresponding to $m_\mathrm{v}=m_\mathrm{v,min}$.}
\label{fig:saturation-vi}
\end{figure}

The storage energy capacity required to execute $\hat{c}_\mathrm{vi}(s)$ is predominantly determined by the values of $\alpha_\mathrm{b}$ and the system secondary control constant $K_\mathrm{I}$ (the bottom panel in Fig.~\ref{fig:saturation-vi}). For zero values of $\alpha_\mathrm{b}$, the energy capacity required will be very small ($10^{-3}$ pu$\cdot$h or less). Note that, in practice, unless $\alpha_\mathrm{b}$ is substantial, the minimum storage energy capacity will be determined by the $C$-rate of the batteries used, hence, the exact value of $E_{\mathrm{b,max,vi}}$ is of little importance. For values of $\alpha_\mathrm{b}$ that are not very small, an approximate formula $E_{\mathrm{b,max,vi}} \approx \alpha_\mathrm{b}/K_\mathrm{I}$ can be used. This suggests that higher secondary control gains tend to reduce the required storage energy capacity. 

To summarize, VI can be an effective tool to improve transient frequency performance, however, RoCoF and Nadir become coupled under this type of control. Modest amounts of IR from storage can somewhat improve the frequency Nadir, however, complete elimination of Nadir will require massive amounts of IR, which makes it impractical.

\section{Frequency Shaping Control}

Although VI seems to be a straightforward choice, inverter-interfaced storage units are potentially capable of executing a much wider class of control strategies. In this section we propose a novel control strategy -- \emph{frequency shaping control} -- which is able to decouple the Nadir elimination task from the RoCoF tuning one. The general idea behind the frequency shaping control is to effectively turn the system frequency dynamics into a first-order one, which is dependent on two control parameters, by employing a special form of storage response function $\hat{c}(s)$. Such a first-order response has no Nadir naturally, while tuning of the two parameters will provide the ability to adjust both the steady-state frequency deviation and the RoCoF, independently.

\subsection{Controller Design}

For designing the needed frequency shaping control (which we will denote as $\hat{c}_\mathrm{fs}(s)$), let us consider a second-order transfer function of the following form:\footnotemark[3]

\begin{equation}\label{eq:c_fs}
    \hat{c}_\mathrm{fs}(s):=-\frac{A_1 s^2 +A_2s + A_3}{\tau_\mathrm{T} s + 1}\,,
\end{equation}
where $\tau_\mathrm{T}$ is the system turbine time constant from \eqref{eq:tur-dyan}, $A_1$, $A_2$, and $A_3$ are tunable control parameters. Then, the desired frequency shaping control is determined by the next \emph{theorem}.  
\footnotetext[3]{As in virtual inertia control, an additional low-pass filter is also needed in frequency shaping control for the same reason.}

\begin{thm}[Frequency shaping]\label{theorem:shaping}
The single-area system from Fig.~\ref{fig:1bus-model} will respond to a power imbalance $-p_\mathrm{L}$ as a first-order system of the form:
\begin{equation}\label{eq:hd}
    \hat{h}(s) = \frac{1}{as+b}\qquad\text{i.e.}\qquad\hat{\omega}(s)=-\hat{h}(s)\hat{p}_\mathrm{L}(s)
\end{equation}
with $a$ and $b$ being positive constants, if the corresponding storage frequency response function $\hat{c}(s)$ is given by \eqref{eq:c_fs} with the following values of constants: 
\begin{subequations}\label{eq:A_parameters}
\begin{align}
    A_1 =&\ \tau_\mathrm{T}\left(a-2H\right)\,,\\
    A_2 =&\ b\tau_\mathrm{T}+ a - 2H\,,\\
    A_3 =&\ b-\alpha_\mathrm{g}\,.
\end{align}
\end{subequations}
In this case, the system frequency will experience no Nadir and the steady-state frequency deviation {$\Delta\omega$} and the  RoCoF {$|\dot{\omega}|_{\infty}$} will be determined by the following expressions:  
\begin{align}\label{eq:ab-tune}
    {\Delta\omega}= -\frac{\Delta P}{b} \qquad\text{and}\qquad {|\dot{\omega}|_{\infty} }= \frac{\Delta P}{a} 
\end{align}
when $\hat{p}_\mathrm{L}(s)=\Delta P/s$.
\end{thm}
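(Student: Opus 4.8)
The plan is to work entirely in the Laplace domain with the two simplifications announced earlier in the manuscript ($\alpha_\mathrm{L}=0$ and no secondary control, i.e.\ $K_\mathrm{I}=0$), and then match coefficients. First I would read off the open-loop generator map from \eqref{eq:tur-dyan}: with $K_\mathrm{I}=0$ one has $\hat{p}_\mathrm{m}(s)=\hat{g}(s)\hat{\omega}(s)$ with $\hat{g}(s)=-\alpha_\mathrm{g}/(\tau_\mathrm{T}s+1)$. Substituting this together with $\hat{p}_\mathrm{b}(s)=\hat{c}(s)\hat{\omega}(s)$ into the Laplace transform of the swing equation \eqref{eq:dyn_onebus_freq} (with $\alpha_\mathrm{L}=0$), namely $2Hs\,\hat{\omega}=\hat{p}_\mathrm{m}-\hat{p}_\mathrm{L}+\hat{p}_\mathrm{b}$, and solving for $\hat{\omega}$ gives the closed-loop map
\[
\hat{\omega}(s) = \frac{-1}{\,2Hs + \dfrac{\alpha_\mathrm{g}}{\tau_\mathrm{T}s+1} - \hat{c}(s)\,}\;\hat{p}_\mathrm{L}(s).
\]

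Next I would impose that this equals $-\hat{h}(s)\hat{p}_\mathrm{L}(s)$ with $\hat{h}(s)=1/(as+b)$; equivalently, the bracketed denominator must equal $as+b$. Solving for the required controller gives $\hat{c}(s)=(2H-a)s-b+\alpha_\mathrm{g}/(\tau_\mathrm{T}s+1)$, and putting this over the common denominator $\tau_\mathrm{T}s+1$ yields a transfer function of exactly the form \eqref{eq:c_fs}, $\hat{c}(s)=-\frac{A_1 s^2+A_2 s+A_3}{\tau_\mathrm{T}s+1}$, whose numerator coefficients, after collecting powers of $s$, are precisely $A_1=\tau_\mathrm{T}(a-2H)$, $A_2=b\tau_\mathrm{T}+a-2H$, $A_3=b-\alpha_\mathrm{g}$, i.e.\ \eqref{eq:A_parameters}. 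This algebra is reversible, so the version I would actually write out is the converse: substitute \eqref{eq:A_parameters} into \eqref{eq:c_fs}, simplify, and check that $2Hs+\alpha_\mathrm{g}/(\tau_\mathrm{T}s+1)-\hat{c}_\mathrm{fs}(s)=as+b$, which makes the displayed closed-loop map equal to $-\hat{h}(s)\hat{p}_\mathrm{L}(s)$ with $\hat{h}(s)=1/(as+b)$ as claimed.

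It then remains to verify the three stated properties of the resulting step response $\hat{\omega}(s)=-\Delta P/[s(as+b)]$. For $a,b>0$ the only pole of $\hat{h}$ is at $s=-b/a<0$, so in the time domain $\omega(t)=-\tfrac{\Delta P}{b}\bigl(1-e^{-bt/a}\bigr)$, which is monotone and hence has no overshoot --- this is exactly the meaning of ``no Nadir'' fixed in the footnote. The final value theorem gives $\Delta\omega=\lim_{s\to0}s\,\hat{\omega}(s)=-\Delta P/b$. For the RoCoF, $\dot{\omega}$ has Laplace transform $-\Delta P/(as+b)$, and since $\dot{\omega}(t)=-\tfrac{\Delta P}{a}e^{-bt/a}$ decays monotonically in magnitude, $|\dot{\omega}|_\infty=|\dot{\omega}(0^+)|=\lim_{s\to\infty}\bigl|{-\Delta P\,s/(as+b)}\bigr|=\Delta P/a$, which also follows from the initial value theorem. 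Together these give \eqref{eq:ab-tune}.

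I expect the only real subtlety --- more bookkeeping than genuine obstacle --- to be keeping the sign conventions of $\hat{g}(s)$ and $\hat{c}(s)$ straight, and remembering that properness of $\hat{c}_\mathrm{fs}$ (a degree-two numerator over a degree-one denominator) is restored by the auxiliary low-pass filter of the accompanying footnote, which does not affect the timescales of interest. I would also state explicitly that the positivity hypothesis on $a$ and $b$ is precisely what is used for monotonicity/stability; nothing forces $A_1,A_2,A_3$ to be positive (indeed $A_3<0$ whenever $b<\alpha_\mathrm{g}$), but that is harmless since $\hat{c}_\mathrm{fs}$ is merely a controller. Finally, a careful reader may note that reducing the map to a first-order $\hat{h}$ involves a pole--zero cancellation of the turbine mode at $s=-1/\tau_\mathrm{T}$, so what is being shaped is the input--output response from $p_\mathrm{L}$ to $\omega$; this is consistent with the statement of the theorem.
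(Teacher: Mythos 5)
Your proposal is correct and follows essentially the same route as the paper: solve for the $\hat{c}(s)$ that makes the closed-loop map from $-p_\mathrm{L}$ to $\omega$ equal to $1/(as+b)$ (the paper writes this as $\hat{c}=(\hat{h}-\hat{g})/(\hat{h}\hat{g})$ with $\hat{g}(s)=1/(2Hs+\alpha_\mathrm{g}/(\tau_\mathrm{T}s+1))$, which is the same computation), then obtain \eqref{eq:ab-tune} from the initial and final value theorems. Your explicit time-domain check of monotonicity for the ``no Nadir'' claim and the remark on the cancelled turbine mode are details the paper leaves implicit, but they do not change the argument.
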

\begin{proof}
Let the desired closed-loop transfer function from $-p_\mathrm{L}$ to $\omega$ be a first-order one given by \eqref{eq:hd}. Then, using explicit expression for the generator/turbine transfer function $\hat{g}(s)$ and \eqref{eq:hd}, one can directly solve  for the desired storage control strategy as
\begin{equation}
    \frac{\hat{h}(s)-\hat{g}(s)}{\hat{h}(s)\hat{g}(s)}=-\frac{A_1 s^2 +A_2s + A_3}{\tau_\mathrm{T} s + 1}=:\hat{c}_\mathrm{fs}(s)
\end{equation}
with $A_1$, $A_2$, and $A_3$ given by \eqref{eq:A_parameters}.

Next, applying initial and final value theorems to \eqref{eq:hd}, we find that $a$ and $b$ satisfy the following relations:
\begin{subequations}
\begin{align}
    {|\dot{\omega}|_{\infty} }=&\ \Delta P\lim_{s\to\infty} s^2\frac{\hat{h}(s)}{s}
    =\frac{\Delta P}{a}\,,\label{eq:rocof-a}\\
    \Delta\omega=& -\Delta P\lim_{s\to0}s\frac{\hat{h}(s)}{s}
    =-\frac{\Delta P}{b}\,,\label{eq:fss-b}
\end{align}
\end{subequations}
which are identical to \eqref{eq:ab-tune}. 
\end{proof}

Theorem \ref{theorem:shaping} allows one to tune the storage frequency response strategy that guarantees Nadir-less response for the whole system while also providing the pre-set values for RoCoF and steady-state frequency deviation. 
However, such a tuning can lead to sub-optimal use of the storage capabilities if the system response without storage already provides satisfactory performance in terms of either RoCoF or steady-state frequency deviation (or both). Therefore, the actual tuning will depend on the existing system performance. Suppose the desired values of the RoCoF and steady-state frequency deviation are $|\dot{\omega}|_{\infty \mathrm{d}}$ and $\Delta\omega_{\mathrm{d}}$, respectively. Overall, four cases are possible: 

\subsubsection{Case 1} System's response suffices to provide satisfactory performance for both RoCoF and steady-state frequency deviation. In this case, one needs to use their actual values instead of the maximum allowed ones for tuning $a$ and $b$. Thus, the optimal settings are:   
\begin{equation}
a=2H\qquad\text{and}\qquad b= \alpha_\mathrm{g}\,.
\end{equation}
Here, the effect of storage is to eliminate frequency Nadir while keeping RoCoF and steady-state frequency deviation unchanged. 
\subsubsection{Case 2} System's response suffices to provide satisfactory performance for RoCoF but not for steady-state frequency deviation -- there is enough inertia but not enough primary response from generators. Then, $a$ and $b$ should be:
\begin{equation}
a=2H\qquad\text{and}\qquad b= -\frac{\Delta P}{\Delta\omega_{\mathrm{d}}}.
\end{equation}

Note that the above two cases correspond to the so-called iDroop -- a dynamic droop tuning reported by us recently \cite{jiang2019dynamic}: 
\begin{equation}\label{eq:co-idroop-nonadir}
\hat{c}^\star_\mathrm{iDroop}(s) =\frac{ \alpha_\mathrm{g} }{\tau_\mathrm{T} s+1}-\left(\alpha_\mathrm{g} + \alpha_\mathrm{b}\right)\,.
\end{equation}
This type of control is capable of eliminating Nadir and at the same time improve the steady-state frequency deviation (if needed), but does not affect the RoCoF.

\subsubsection{Case 3} System's response suffices to provide satisfactory steady-state frequency deviation but not RoCoF -- there is enough primary response but not enough inertia. In this case:
\begin{equation}
a=\frac{\Delta P }{|\dot{\omega}|_{\infty\mathrm{d}}}\qquad\text{and}\qquad b= \alpha_\mathrm{g}.
\end{equation}

\subsubsection{Case 4} System's response is insufficient to provide satisfactory steady-state frequency deviation and RoCoF -- there is lack of both primary response and inertia. In this case:
\begin{equation}
a=\frac{\Delta P}{|\dot{\omega}|_{\infty\mathrm{d}}}\qquad\text{and}\qquad b= - \frac{\Delta P}{\Delta\omega_{\mathrm{d}}}.
\end{equation}

For all four cases, the frequency shaping control, as its name suggests, makes the system frequency response effectively first-order, thus eliminating Nadir. Moreover, it also ensures that both RoCoF and steady-state frequency deviation are within the pre-specified limits $|\dot{\omega}|_{\infty\mathrm{d}}$ and  $\Delta\omega_{\mathrm{d}}$, respectively. Fig.~\ref{fig:dyn-fre-ab0_nodb} illustrates the well-shaped frequency response under two different tunings of  frequency shaping control (corresponding to Cases $1$ and $3$) compared with VI control (with $m_\mathrm{v}=m_\mathrm{v,min}$) and  no storage base scenario.  

To explicitly demonstrate the difference between frequency shaping control and VI, Fig.~\ref{fig:tradeoff-fs} shows Nadir as a function of the RoCoF. It is obvious that for VI those two metrics are coupled, while the frequency shaping control provides us the freedom to tune RoCoF without sacrificing Nadir elimination.

\begin{figure}[t!]
\centering
\includegraphics[width=0.72\columnwidth]{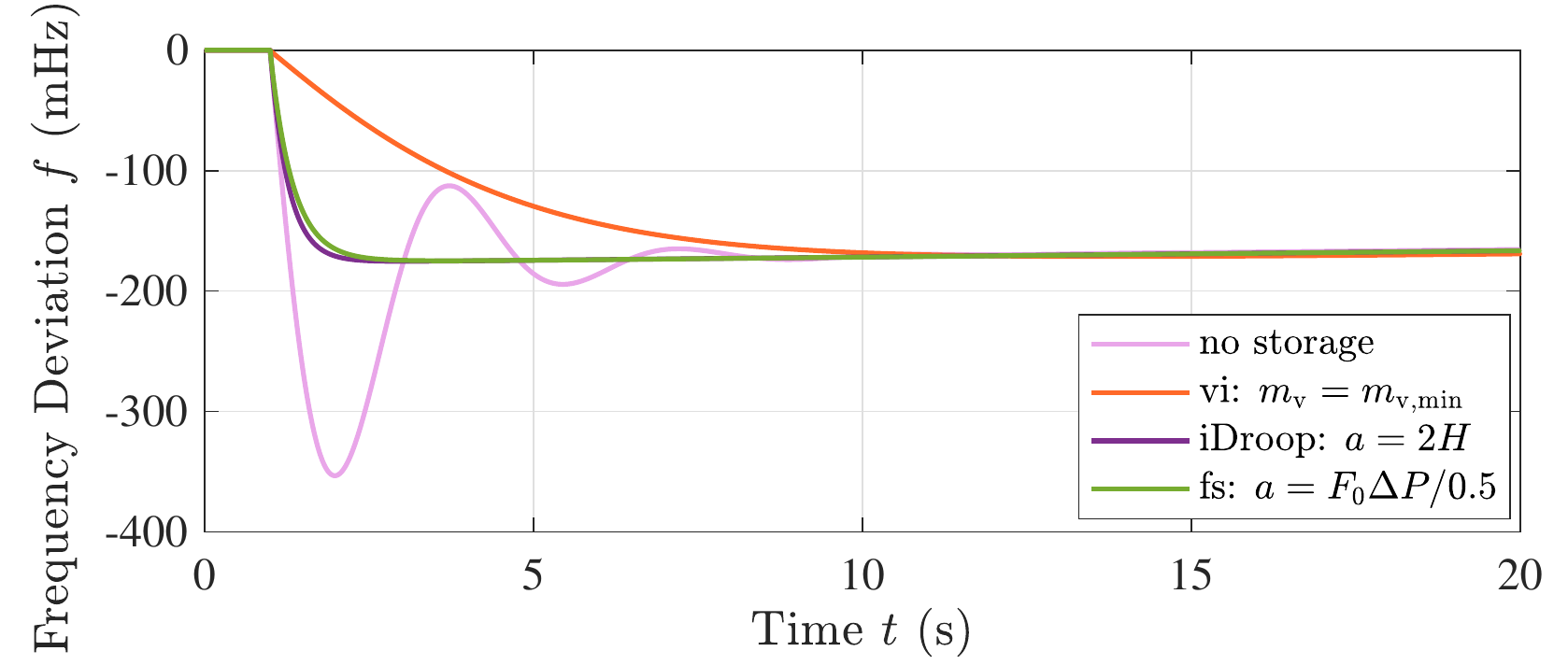}
\caption{Comparison of frequency deviations with Nadir eliminated under virtual inertia and frequency shaping control to a step
power imbalance for $\alpha_\mathrm{b} = 0$.}
\label{fig:dyn-fre-ab0_nodb}
\end{figure}

\begin{figure}[t!]
\centering
\includegraphics[width=0.72\columnwidth]{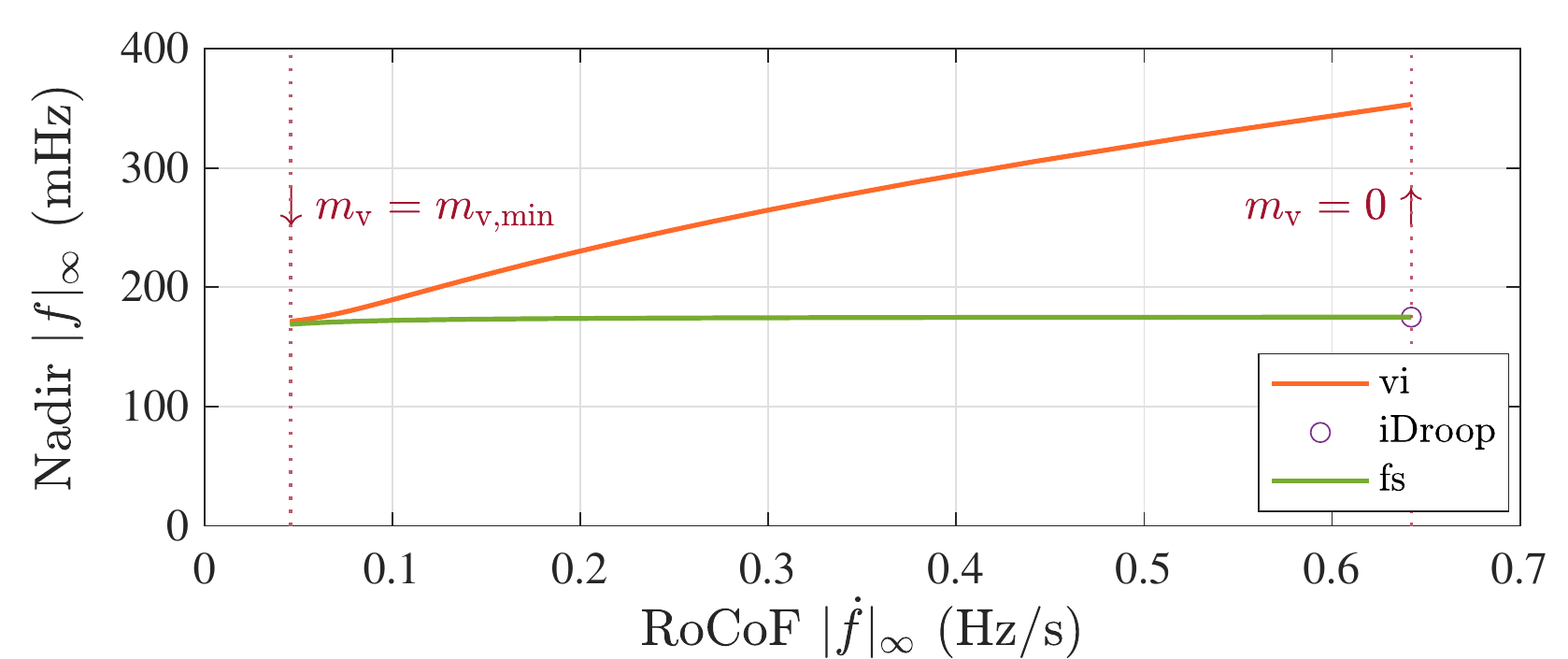}
\caption{Nadir as a function of RoCoF under frequency shaping control for $\alpha_\mathrm{b}=0$ and $a$ within the range of $\left[2H,(2H+m_\mathrm{v,min}) \right]$.}
\label{fig:tradeoff-fs}
\end{figure}

\begin{figure}[t!]
\centering
\includegraphics[width=0.72\columnwidth]{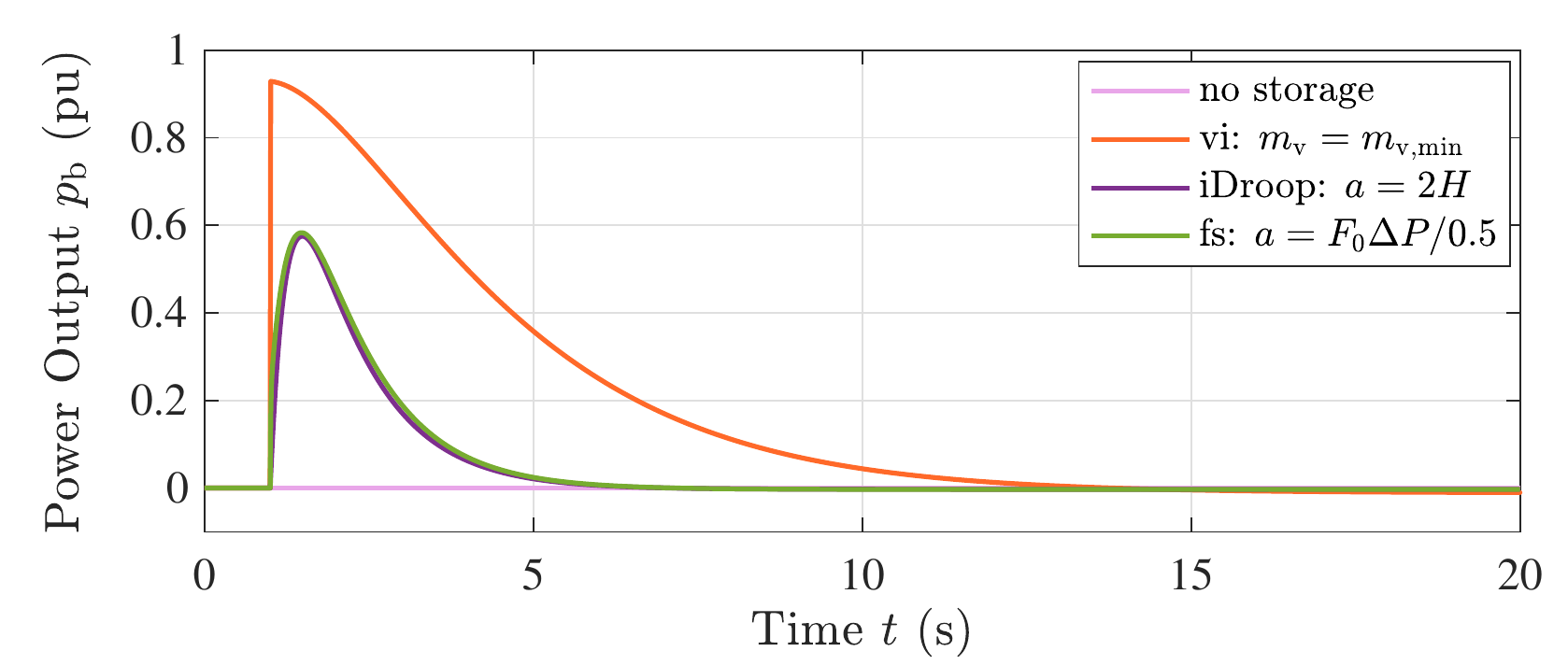}
\caption{Comparison of storage power transient responses with Nadir eliminated under virtual inertia and frequency shaping control to a step
power imbalance for $\alpha_\mathrm{b} = 0$.}
\label{fig:dyn-storage-ab0_nodb2}
\end{figure}

\subsection{Power and Energy Requirements on Storage}
We next quantify the required amount of storage power to execute the frequency shaping control. Provided that the control is given by \eqref{eq:c_fs}, the storage power output (following a step power imbalance) in Laplace domain is: 
\begin{align}
\hat{p}_\mathrm{b,fs}(s)
\!=\! -\frac{\Delta P}{s}\hat{h}(s)\hat{c}_\mathrm{fs}(s)\!=\!\frac{\Delta P\!\left(A_1 s^2\!+\! A_2 s \!+\! A_3\right)}{s\left(as+b\right)\left(\tau_\mathrm{T}s+1\right)}\,.    
\end{align}
From here, an explicit (and rather cumbersome) expression for power in time domain can be found. Fig. \ref{fig:dyn-storage-ab0_nodb2} shows the storage power output as a function of time for the four control strategies. Clearly, the proposed frequency shaping control outperforms the VI control -- it requires up to $40\%$ less storage power. In addition, the duration of the power peak is much shorter for frequency shaping control, which can allow to decrease the storage installed power even more.

The energy capacity requirement for the frequency shaping control is mostly determined by the effective battery droop, similarly to VI strategy. For Cases $1$ and $3$ of the previous subsection, the storage effective droop is zero, so the energy requirement is very small and, like for VI, capacity will be mostly determined by the $C$-rate of batteries used.\footnotemark[4] For Cases $2$ and $4$, the storage is supposed to participate in the steady-state frequency response and the energy capacity requirement will be significantly higher. Similarly to VI control, it can be estimated as $E_{\mathrm{b,max,fs}} \approx \alpha_{\mathrm{b}}/K_\mathrm{I}$, where $\alpha_\mathrm{b} \equiv A_3 = b - \alpha_\mathrm{g}$ -- the storage effective PFR constant.  

\footnotetext[4]{Energy requirements for frequency shaping control will be formally less than that for virtual inertia, but this is irrelevant in practice due to $C$-rate limitations.}

The intuition behind the effectiveness of the frequency shaping control is that it is able to take the most advantage of the system natural frequency response capabilities. While VI can provide performance increase for both RoCoF and Nadir, there is no way to decouple them in order to optimize the control effort. Frequency shaping control, on the contrary, provides virtual inertia only if it is needed to secure acceptable RoCoF value, and only with the minimum value needed. Frequency Nadir is then taken care of by a different contribution -- iDroop, that is able to guarantee the effective first-order system dynamics. Thus, frequency shaping control leverages the knowledge of the system inertia, primary response, and turbine dynamics in order to provide a more optimal response by making full use of the system's own control capabilities.

\section{Generalization for Multi-machine and Multiple-area Systems}\label{sec:multiple}

Storage control strategy described by \eqref{eq:c_fs} was derived for a single-machine representation with a simplified model for generator turbine. Actually, the same methodology can be applied to derive control strategy for more general cases. In this section we provide a generalization of the method for multi-machine and multiple-area systems with arbitrary models for governors and turbines. 

We start from deriving the closed-loop power-frequency response for a multi-machine single-area system. Let $H_i$ be the inertia constant of $i$'s machine (for all the variables we denote machines by a lower index $i$) and $\hat{T}_i(s)$ be a combined transfer function of its governor and turbine, i.e., $\hat{p}_{\mathrm{m},i}(s)=-\alpha_{\mathrm{g},i} \hat{T}_i(s) \hat{\omega}_i(s)$. Note that $T_i(0)=1$ for every machine. Now, the multi-machine closed-loop frequency response to power imbalance is:
\begin{equation}
    \hat{g}(s) = \frac{1}{\sum_i (2 H_i s + \alpha_{\mathrm{g},i}\hat{T}_i(s))}\,.
\end{equation}
Similarly to how we did before, we state that the additional storage frequency control strategy should transform the overall system response to an effective first-order form given by \eqref{eq:hd} with the constants $a$ and $b$ determining the system RoCoF and steady-state frequency deviation respectively. In the case of multi-machine system, the storage frequency response can be provided either in aggregated or fully decentralized way. In the latter case, one can think that each machine is ``matched" by a corresponding storage response function $\hat{c}_i(s)$ of individual storage units in such a way that the overall system dynamics satisfies \eqref{eq:hd}. Then, the following relation should be satisfied:        
\begin{equation}
    \sum_i(2H_i s + \alpha_{\mathrm{g},i}\hat{T}_i(s) - \hat{c}_i(s))= as+b\,.
\end{equation}
Let us now represent the response functions $\hat{c}_i(s)$ of individual storage units in the following way:
\begin{equation}\label{eq:ind_storage}
    \hat{c}_i(s) = -(m_i s - \alpha_{\mathrm{g},i}\hat{T}_i(s) +\alpha_{\mathrm{g},i} + \alpha_{\mathrm{b},i})\,,  
\end{equation}
where the first term represents the virtual inertia response that is responsible for RoCoF, and the other terms represent the dynamic droop that is responsible for Nadir and steady-state frequency deviation. 

Derivation of required values for $m_i$ and $\alpha_{\mathrm{b},i}$ is somewhat similar to the derivation for a single-machine system. First of all, if the system's natural response is sufficient to provide satisfactory RoCoF and steady-state frequency response, then all $m_i$ and $\alpha_{\mathrm{b},i}$ can be set to zero, so that control strategy for the storage units becomes:
\begin{equation}
    \hat{c}_i(s) = \alpha_{\mathrm{g},i}\hat{T}_i(s) -\alpha_{\mathrm{g},i}  \,, 
\end{equation}
which is a direct generalization of \eqref{eq:co-idroop-nonadir}. 
In the case either RoCoF or steady-state response (or both) need to be improved by the storage, required storage $m_i$ and $\alpha_{\mathrm{b},i}$ can be determined from the following relations (we assume minimum required settings):
\begin{subequations}\label{eq:multi_m_alpha}
\begin{align}
    \sum_i m_i = &\ \frac{\Delta P}{ |\dot{\omega}|_{\infty\mathrm{d}}} - 2\sum_i H_i, \label{eq:multi_m} \\
    \sum_i \alpha_{\mathrm{b},i} = &\ -\frac{\Delta P}{ \Delta\omega_{\mathrm{d}}} - \sum_i \alpha_{\mathrm{g},i}\,. \label{eq:multi_alpha}
\end{align}
\end{subequations}
Here, $|\dot{\omega}|_{\infty\mathrm{d}}$ and $\Delta\omega_{\mathrm{d}}$ are the maximum allowed values of RoCoF and steady-state frequency deviation respectively (the latter will also correspond to frequency ``Nadir" for first-order response). From the mathematical point of view, as long as \eqref{eq:multi_m_alpha} are satisfied, the assignment of individual values $m_i$ and $\alpha_{\mathrm{b},i}$ can be done arbitrarily. From the practical point of view, contribution according to generator installed power might make sense. Another way, which could be more reasonable, is to set certain minimum requirements for generator inertia and droop gain, and then storage units are tuned to provide some additional $m_i$ and/or $\alpha_{\mathrm{b},i}$ only for those generators that do not meet the threshold with their conventional capabilities. 

Another important practical aspect is the tuning of storage units to provide the response $\hat{T}_i(s)$ that matches the corresponding governor-turbine dynamics. It is possible to tune the storage using the fully detailed governor model. However, even a simple second-order reduced model obtained from $\hat{T}_i(s)$ by balanced truncation procedure provides remarkably good performance. Fig.~\ref{fig:1bus3machine-fre} shows frequency dynamics of a three-machine equivalent system. Two of the machines are equipped with steam turbines modelled using IEEEG1 governor model (which is a good representation for real-life systems \cite{chavez2014governor}), and one of the machines is equipped with hydro turbine. Two of the machines are supposed to have sufficient inertia, while the third one needs to implement additional virtual inertia from the storage unit. The storage frequency controls for all three machines are designed using truncated second-order governor models. Obviously, the frequency response of the whole system under frequency shaping control is very close to the desired first-order one, which has no Nadir and satisfies the constraints on RoCoF and steady-state frequency deviation.    

\begin{figure}[t!]
\centering
\includegraphics[width=0.72\columnwidth]{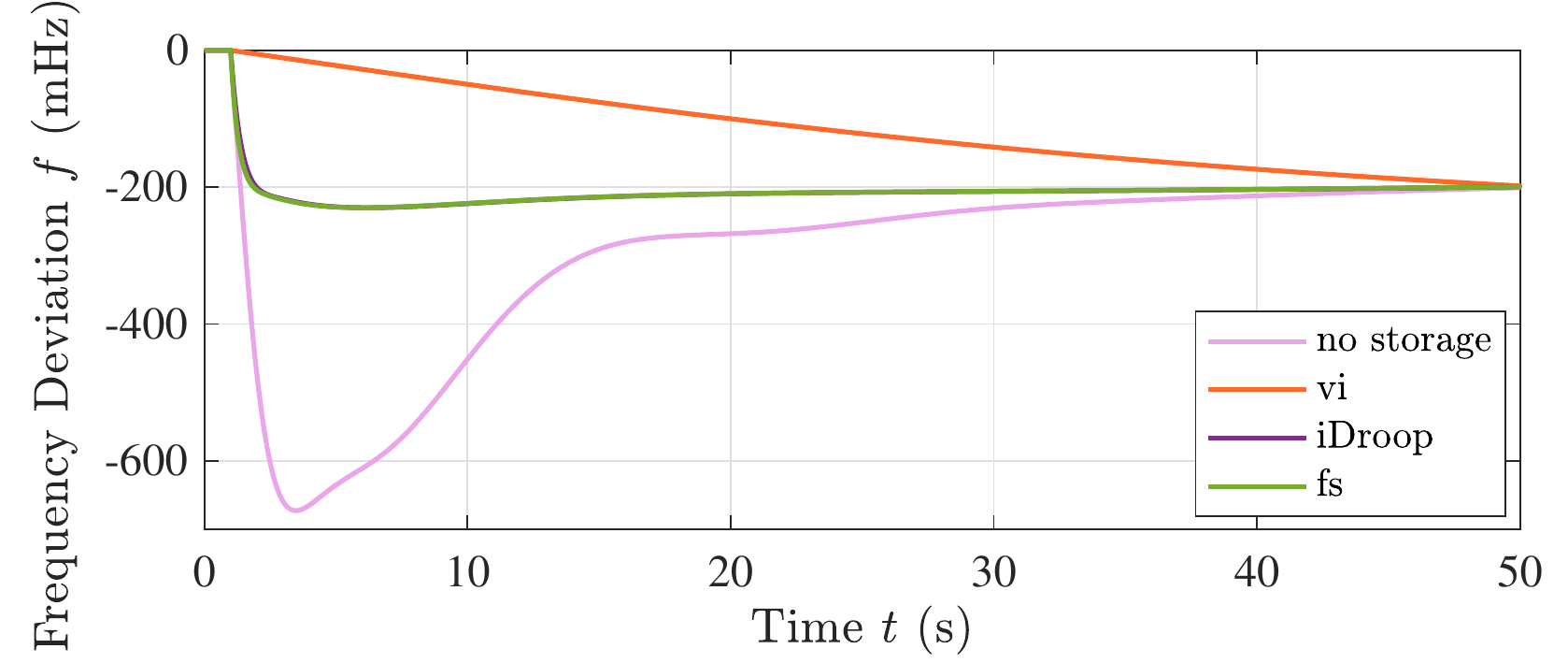}
\caption{Comparison of frequency deviations under different controllers in a single-area three-machine system when a step power imbalance is introduced.}
\label{fig:1bus3machine-fre}
\end{figure}

Finally, we provide the generalization of the method to multiple-area systems. We note that in this case it is impossible to get the ideal first-order response from every area, i.e., it is impossible to eliminate frequency Nadir strictly. However, it is possible to significantly limit both RoCoF and Nadir for each area. We also note that in many cases even when the system is considered to be multiple-area from the point of view of secondary control (i.e., consisting of a number of Balancing Authorities), its transient frequency dynamics can still be described with a good accuracy using a single-area approximation. In other words, the frequency dynamics of every machine in the system is rather close to the dynamics of its COI, which allows one to still use relations \eqref{eq:ind_storage} and \eqref{eq:multi_m_alpha} to tune the storage response. In the case of true multiple-area system (i.e., when there are long interconnection lines present), the following simple generalizations should be made. 

Each area $\alpha$ provides its maximum instantaneous power imbalance $\Delta P^{\alpha}$, then the IR constants $m_i^{\alpha}$ are tuned according to \eqref{eq:multi_m} separately for each area (i.e., as if the areas are isolated). Likewise, the maximum allowed steady-state frequency deviation $\Delta\omega_{\mathrm{d}}^\alpha$ for each area $\alpha$ will serve as a conservative \emph{lower boundary} for the frequency Nadir in the corresponding area, so that the minimum values for PFR constants $\alpha_{\mathrm{b},i}^\alpha$ can be set for each area by \eqref{eq:multi_alpha}. Finally, applying \eqref{eq:multi_alpha} again for the whole system, with $\Delta P$ in the right-hand side being the largest among all the $\Delta P^{\alpha}$ for individual areas, we get the maximum value of the system steady-state frequency deviation. Then, additional tuning of the PFR constants can be made, if the steady-state frequency deviation needs to be further improved. Fig. \ref{fig:2bus-fre} provides an illustration of our control performance for a two-area system. It shows the frequencies in both areas and the frequency of COI, following a contingency in area $2$. Horizontal dashed lines show the steady-state frequency deviations for separate areas -- we see that frequencies in both areas are always well above these lines. 


\section{Conclusion}

We have presented a new type of frequency control strategy for energy storage units, which allows to completely eliminate frequency Nadir by making the system dynamics effectively first-order. Our control method significantly outperforms the conventional VI strategy, requiring up to $40\%$ less peak power from storage, while also significantly reducing the duration of the peak-power response. The effectiveness of our strategy is based on its ability to utilize the system frequency response capability effectively withdrawing the storage response as the generator turbine increases its power output.    
      
Nadir-less dynamics can allow to completely revise the security assessment procedures, which can now be done using simple \emph{algebraic} calculations, rather than dynamic simulations. Moreover, we envision that the ``shaping" of generators' power-frequency response by storage units can provide benefits beyond the frequency control itself. Among the straightforward applications include mitigation of turbine effort for frequency control, and small-signal and transient stability enhancements. Another direction is the development of control loops for power electronics in order to provide device-level execution of the proposed frequency shaping control.   

\begin{figure}[t!]
\centering
\includegraphics[width=0.72\columnwidth]{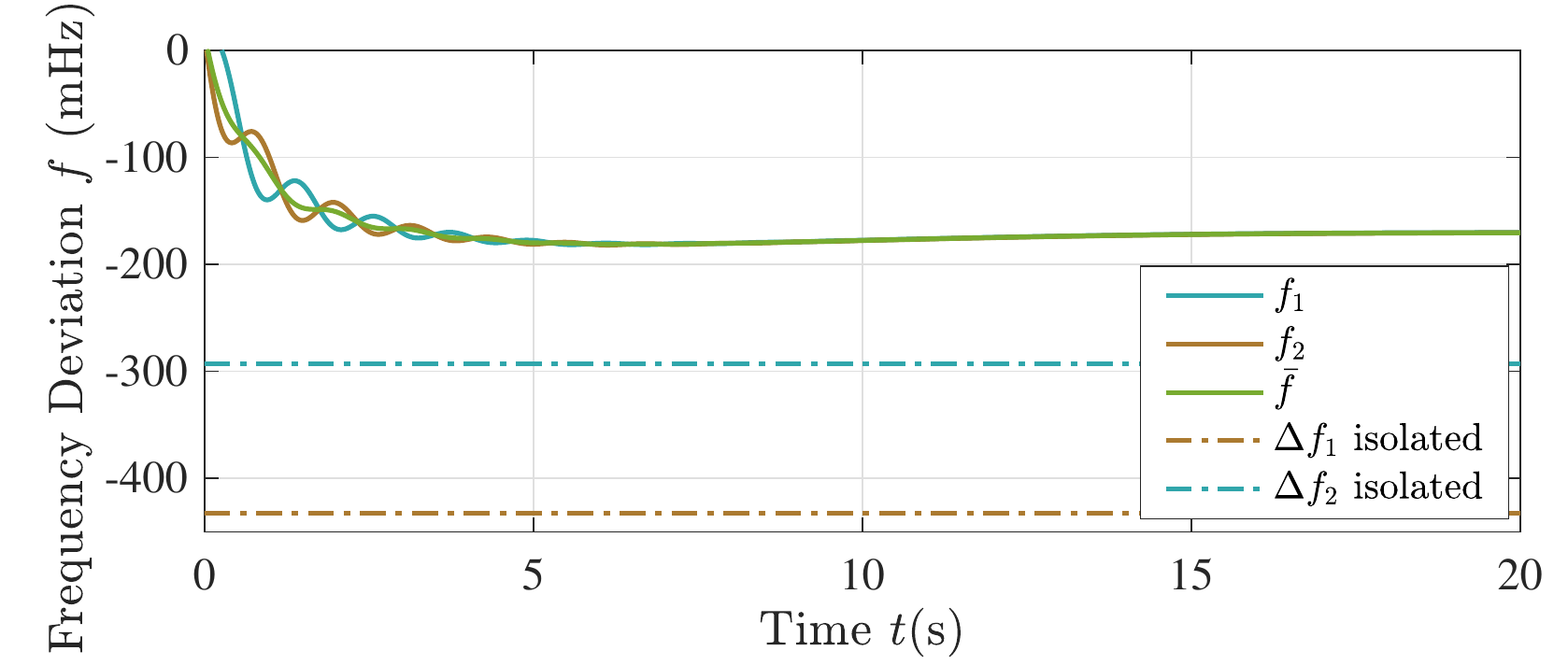}
\caption{Two-area system frequency responses to a step power imbalance in area $2$. Both areas are under frequency shaping control. }
\label{fig:2bus-fre}
\end{figure}

\ifCLASSOPTIONcaptionsoff
  \newpage
\fi


\bibliographystyle{IEEEtran}
\bibliography{TPS}

%









\end{document}